\newtheorem{theorem}{Theorem}
\newtheorem{lemma}[theorem]{Lemma}
\author{Mouhamad El Joubbeh}
\affil{Lebanese University, KALMA Laboratory, Beirut, Lebanon}
\date{}
\begin{document}

\title{Minimal Separators in Graphs}

\maketitle

\begin{abstract}
The Known Menger's theorem states that in a finite graph, the size of a minimum separator set of any pair of vertices is equal to the maximum number of disjoint paths that can be found between these two vertices. In this paper, we study the minimal separators of two non-adjacent vertices in a finite graph, and we give a new elementary proof of Menger's theorem.
\end{abstract}

\section{Introduction}

 Menger's theorem states that, the size of a minimum separator set of any pair of non-adjacent vertices is equal to the maximum number of disjoint paths that can be found between these two vertices. Menger's theorem was first proved by Karl Menger \cite{Menger} in 1927. 
Later on, many different shorter proofs  were given, as Menger's proof was considered a bit long and complicated. Before giving an idea about the proofs that were made, we state in the following some basic definitions and notations that were widely used in the attempts of proving Menger's statement, and that we will also adopt in our work in the latter section. For $u$ and $v$ being two vertices in a graph $G$, a set $S\subseteq V(G)-\{u,v\}$ is a $uv$-separator of $G$ if $u$ and $v$ lie in different components of $G-S$: that is, if every $uv$-path in $G$ contains a vertex in $S$. The minimum order of a $uv$-separator of $G$ is called the $uv$-connectivity of $G$ and is denoted by $\kappa_{G}(u,v)$. Note that if $uv\in E(G)$, then $G$ has no $uv$-separator, in this case we will consider $\kappa_{G}(u,v)=\infty$. A $uv$-separator $S$ of $G$ is said to be a minimal $uv$-separator of $G$ if $|S|=\kappa_{G}(u,v)$. As previously mentioned, a set of $uv$-paths is called internally disjoint if these paths are pairwise disjoint except for the vertices $u$ and $v$, and the maximum number of internally disjoint $uv$-paths in $G$ is denoted by $\mu_{G}(u,v)$. Since every $uv$-separator of $G$ must contain an internal vertex from each path in any set of internally disjoint $uv$-paths in $G$, then we obviously have $\mu_{G}(u,v)\leq \kappa_{G}(u,v) $.\\
After Menger proved his theorem, it was formulated and generalized by many ways, as by the Max-flow Min-cut theorem \cite{Fulkerson} in 1956, which is an elementary theorem within the field of network flows, that actually had some surprising implications in graph
theory. On the other hand, for shorter proofs of Menger's theorem that were established, the first one was given by G. A. Dirac \cite{Dirac} in 1966, where he proved the result by contradiction, after assuming that the statement of Menger is not true and working on a graph with minimal number of vertices not satisfying this statement. In 1978, Peter V. O'Neil \cite{Peter} took a different perspective while proving Menger's theorem, as the ones usually considered in proving its statement, as instead of finding a set of paths internally disjoint of cardinal equal to the cardinal of a considered minimal separator in a graph, he proved that there exists a separator of cardinal equal to the number of the maximum internally disjoint paths. Also, considering simpler proofs of Menger's result, there is one that was given by W. McCuaig\cite{McCuaig} in 1984 by using induction on the number of vertices of the separating set. It could also be interesting to refer that some researchers gave an equivalent formulation of Menger's Theorem \cite{Diestel}: For any two sets $V$ and $W$ of vertices in a graph $G$, a $VW$-path is a path from some vertex $v$ in $V$ to some vertex $w$ in $W$ that passes through no other vertices of $V$ and $W$. A set $S$ of vertices separates $V$ and $W$ if every $VW$-path contains a vertex of $S$, and $S$ is called a $VW$-separating set. It was proved that for any positive integer $k$, there are $k$ pairwise disjoint $VW$-paths in $G$ if and only if every $VW$-separating set contains at least $k$ vertices. Finally, the most recent proof of Menger's theorem was given by 
F. G\"oring \cite{Goring} in 2000.\\
In this paper, we study the minimal separators for it's own sake, we prove in particular that if $S$ is a minimal $uv$-separator in a graph $G$, then $\kappa_{G-e}(u,v)=\kappa_{G}(u,v)$ for all $e=xy$ where $x,y\in S$. This yields us to make a new proof of Menger's theorem.

\section{Minimal separator}
\begin{lemma}\label{lem 1}
	\rm{
		Consider a graph $G$, and let $u,v\in V(G)$ such that $uv\notin E(G)$. Then, $\kappa_{G}(u,v)-1\leq \kappa_{G-a}(u,v)\leq \kappa_{G}(u,v)$ for all $a\in (V(G)-\{u,v\})\cup E(G)$.
	}
\end{lemma}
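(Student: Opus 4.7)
\medskip

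\noindent\textbf{Proof proposal.} My plan is to establish the two inequalities separately, in each case by explicitly exhibiting a separator whose size realizes the desired bound. I will split into the cases $a\in V(G)-\{u,v\}$ and $a\in E(G)$.

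For the upper bound $\kappa_{G-a}(u,v)\leq \kappa_G(u,v)$, I would start from a minimum $uv$-separator $S$ of $G$ with $|S|=\kappa_G(u,v)$ and show that a suitable subset of $S$ is still a $uv$-separator in $G-a$. When $a$ is an edge, $V(G-a)=V(G)$ and every $uv$-path in $G-a$ is a $uv$-path in $G$, so $S$ itself remains a $uv$-separator and the bound follows immediately. When $a$ is a vertex, I take $S':=S\setminus\{a\}$ (which equals $S$ if $a\notin S$); any $uv$-path in $G-a$ lifts to a $uv$-path in $G$ not using $a$, hence meets $S\setminus\{a\}=S'$. This gives $\kappa_{G-a}(u,v)\leq |S'|\leq |S|=\kappa_G(u,v)$.

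For the lower bound $\kappa_G(u,v)\leq \kappa_{G-a}(u,v)+1$, the strategy is symmetric: start from a minimum $uv$-separator $T$ of $G-a$ and enlarge it by one well-chosen vertex to get a $uv$-separator of $G$. If $a$ is a vertex, then $T\cup\{a\}$ is a $uv$-separator of $G$ because every $uv$-path in $G$ either avoids $a$ (and thus meets $T$) or contains $a$. If $a=xy$ is an edge, I take $T\cup\{z\}$ where $z$ is an endpoint of $a$ that is distinct from $u$ and $v$; since $uv\notin E(G)$, at least one such endpoint exists, and any $uv$-path in $G$ either does not use the edge $a$ (and meets $T$) or does use it, in which case it passes through $z$. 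In both subcases the resulting set separates $u$ from $v$ in $G$, yielding the desired inequality.

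The argument is essentially a case analysis, so there is no single deep obstacle; the one point that requires care is the edge case when $a=xy$ has an endpoint equal to $u$ or $v$, where I must be careful to add to $T$ the \emph{other} endpoint, using the hypothesis $uv\notin E(G)$ to guarantee that not both endpoints of $a$ lie in $\{u,v\}$. Once this subtlety is handled, assembling the two inequalities gives the full chain $\kappa_G(u,v)-1\leq \kappa_{G-a}(u,v)\leq \kappa_G(u,v)$.
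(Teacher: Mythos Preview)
Your proposal is correct and follows essentially the same approach as the paper: both inequalities are obtained by taking a minimum $uv$-separator on one side and modifying it by at most one vertex to obtain a separator on the other side. Your treatment is in fact a bit more careful than the paper's---you explicitly handle the case $a\in S$ for the upper bound and the case where an endpoint of the deleted edge lies in $\{u,v\}$ for the lower bound---but the underlying argument is the same.
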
 

\begin{proof}
	Let $a\in (V(G)-\{u,v\})\cup E(G)$ and let $G'=G-a$. By simply remarking that any $uv$-separator in $G$ is a $uv$-separator in $G'$, then $\kappa_{G'}(u,v)\leq \kappa_{G}(u,v)$. In the other hand, if $\kappa_{G'}(u,v)< \kappa_{G}(u,v)-1$, then for any $uv$-separator $S$ in $G'$, $S\cup \{a\}$ wen $a$ is a vertex or $S\cup \{x\}$ wen $a=xy$ is an edge, is a $uv$-separator in $G$ with $|S|\leq \kappa_{G}(u,v)-1$, a contradiction.
\end{proof}

\begin{theorem} \label{thm 1}
	\rm{
		Consider a graph $G$, and let $u,v\in V(G)$ such that $uv\notin E(G)$. Then, $ \kappa_{G-e}(u,v)= \kappa_{G}(u,v) \ \ \forall \ e\in <S>$, where $S$ is a minimal $uv$-separator of $G$.
	}
\end{theorem}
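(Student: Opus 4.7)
The plan is to argue by contradiction, combining Lemma~\ref{lem 1} with the standard submodular inequality for vertex boundaries. Set $k = \kappa_G(u,v)$; by Lemma~\ref{lem 1}, $\kappa_{G-e}(u,v) \in \{k-1,k\}$, so it is enough to exclude the value $k-1$. Suppose then that $T$ is a $uv$-separator of $G - e$ with $|T| = k-1$. Since $|T| < k$, the set $T$ does not separate $u$ from $v$ in $G$, so some $uv$-path $P$ of $G$ avoids $T$; this $P$ must use the edge $e = xy$, because otherwise it would already lie in $G - e$. Labelling so that $x$ is met first on $P$, one obtains $x, y \notin T$, with $x$ lying in the $u$-component $B$ of $(G-e) - T$ and $y$ lying in the $v$-component.

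Let $A$ denote the $u$-component of $G - S$, and write $\partial X$ for the set of vertices of $V(G) \setminus X$ having a neighbor in $X$ inside $G - e$. Since $S$ is a minimum $uv$-separator, each $s \in S$ has a neighbor in $A$ (otherwise $S - \{s\}$ would already separate), so $\partial A = S$; moreover $e$ has both endpoints in $S$, hence its removal does not change $G - S$ and in particular does not change $\partial A$. A routine case check on the four regions $A \cap B,\ A \setminus B,\ B \setminus A,\ V(G) \setminus (A \cup B)$ yields the submodular inequality
\[
|\partial(A \cap B)| + |\partial(A \cup B)| \;\leq\; |\partial A| + |\partial B| \;=\; k + (k-1) \;=\; 2k - 1.
\]
To contradict this, I will show that each term on the left has size at least $k$, by upgrading the two sets to $uv$-separators of $G$ itself. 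Both $A \cap B$ and $A \cup B$ contain $u$ and miss $v$, and the crucial point is that their $G$-boundaries coincide with $\partial(A \cap B)$ and $\partial(A \cup B)$: for $A \cap B$, the endpoints $x \in B \setminus A$ and $y \notin A \cup B$ both lie outside $A \cap B$, so $e$ contributes nothing; for $A \cup B$, the edge $e$ joins $x \in A \cup B$ to $y \notin A \cup B$, but the minimality of $S$ provides $y$ with a neighbor in $A$ different from $x$, so $y$ already lies in $\partial(A \cup B)$ independently of $e$. Hence each of $\partial(A \cap B)$, $\partial(A \cup B)$ is a $uv$-separator of $G$ of size at least $k$, and their sum is at least $2k$, contradicting the bound $2k - 1$.

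The delicate step on which the whole argument hangs is this boundary-stability claim: verifying that passing from $G - e$ back to $G$ does not enlarge $\partial(A \cap B)$ or $\partial(A \cup B)$. This is exactly where the hypothesis $x, y \in S$ becomes indispensable, since it simultaneously forces both endpoints of $e$ to lie outside $A$ and lets the minimality of $S$ furnish the neighbor of $y$ in $A$ that certifies $y \in \partial(A \cup B)$ even after $e$ is deleted.
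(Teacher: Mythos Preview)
Your argument is correct, and it follows a genuinely different route from the paper.

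The paper proceeds by induction on $k=\kappa_G(u,v)$: assuming a minimum $uv$-separator $S'$ of $G-e$ with $|S'|=k-1$, it first uses the inductive hypothesis to show $S\cap S'=\emptyset$, then splits $S$ as $S_1=S\cap C_u$ and $S_2=S\setminus S_1$ according to the components $C_u,C_v$ of $G'-S'$, and for each $uv$-path in $G-S_2$ (respectively $G-S_1$) records a suitable ``last'' vertex of $S'$, producing sets $S'_1,S'_2\subseteq S'$ with $|S'_i|\ge|S_i|$. A counting argument then forces $S'_1\cap S'_2\neq\emptyset$, and splicing two such paths at a common point yields a $uv$-path in $G$ missing $S$, the desired contradiction.

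Your proof dispenses with induction entirely and instead applies the submodularity of the vertex-boundary function in $G-e$ to the $u$-sides $A$ and $B$ of the two separators $S$ and $T$. The one nontrivial point, which you identify correctly, is that the $(G-e)$-boundaries of $A\cap B$ and $A\cup B$ coincide with their $G$-boundaries: both ends of $e$ lie in $S$, hence outside $A$, so $e$ is irrelevant for $A\cap B$; and for $A\cup B$ the minimality of $S$ supplies $y$ with a neighbour in $A$ other than $x$, so $y$ already sits in $\partial(A\cup B)$ without $e$. Once that is checked, each boundary is a $uv$-separator of $G$ of size at least $k$, contradicting the bound $|\partial A|+|\partial B|\le 2k-1$. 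This ``uncrossing'' technique is standard in connectivity theory, so your proof situates the result within a known framework and is noticeably shorter; the paper's proof, by contrast, is entirely self-contained but more intricate and relies on the inductive step in an essential way.
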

\begin{proof}
	We will proceed by induction on $\kappa_{G}(u,v)$. The case $\kappa_{G}(u,v)=1$ being trivial. Now, for $\kappa_{G}(u,v)=k$, ($k\geq2$). Let $S=\{x_{1},x_{2}, ... , x_{k}\}$ be a minimal $uv$-separator of $G$. Suppose to the contrary that there exists an edge $e\in <S>$ such that $\kappa_{G-e}(u,v)\neq k$. We are looking for $uv$-path $P$ in $G$ such that $P\cap S=\phi$. This will give us the contradiction.\\
	\linebreak
	By using Lemma \ref{lem 1}, we have $\kappa_{G-e}(u,v)=k-1$. Without loss of generality we may suppose that $e=x_{1}x_{2}$. Let $G'=G-e$ and $S'$ be a minimal $uv$-separator of $G'$; $|S'|=\kappa_{G'}(u,v)=k-1$. The first observation of this analysis is that $S\cap S'=\phi$. Otherwise, let $x_{i}\in S\cap S'$, for $1\leq i \leq k$. Let $G_{i}=G-x_{i}$, $G'_{i}=G_{i}-e$ and $S_{i}=S-x_{i}$. It is clear that $\kappa_{G_{i}}(u,v)=k-1$, and $S_{i}$ is a minimal $uv$-separator of $G_{i}$. If $i\in\{1,2\}$, $G'_{i}=G_{i}$, so $\kappa_{G'_{i}}(u,v)=k-1$. If $i\in \{3, ..., k\}$, then applying the induction process, we get $\kappa_{G'_{i}}(u,v)=k-1$. Then, for all $1\leq i \leq k$ we have $\kappa_{G'_{i}}(u,v)=k-1$. Finally, since $G'_{i} \subset G'$, then $G'_{i}-S'\subset G'-S'$. Thus, $S'-\{x_{i}\} $ is a $uv$-separator of $G'_{i}$ as $S'$ is a $uv$-separator of $G'$, $x_{i}\in S'$ and $x_{i}\notin  G'_{i}$. Hence, $\kappa_{G'_{i}}(u,v)\leq |S'-\{x_{i}\}|=k-2$; which gives a contradiction.  \\
	\linebreak
	Let $C_{u}$ and $C_{v}$ be two connected components in $G'-S'$ such that $u\in C_{u}$ and $v\in C_{v}$. Since $S'$ is a $uv$-separator of $G'$ then, $C_{u}\cap C_{v}=\phi$. Set $S_{u}=S\cap C_{u}$ and $S_{v}=S\cap C_{v}$. Since $|S'|=k-1$, then $G-S'$ contains a $uv$-path. Then, there exists a connected component $C_{uv}$ in $G-S'$ containing both $u$ and $v$. We have $C_{uv}-e\subset G-S'-e=G'-S'$, and so $e$ separates $u$ and $v$ in $C_{uv}$, this implies that $e$ is a bridge of $C_{uv}$. Then $C_{uv}-e=C_{1}\cup C_{2}$ where $C_{1}$  and $C_{2}$  are two connected components containing $u$ and $v$ respectively. Without loss of generality, we may assume $x_{1} \in C_{1}$ and $x_{2}\in C_{2}$. We remark that $C_{1}=C_{u}$ and $C_{2}=C_{v}$. Then, $x_{1}\in S\cap C_{u}$ and $x_{2}\in S\cap C_{v}$, so $S_{u} \neq \phi$ and $S_{v}\neq \phi$.\\
	\linebreak
	Set $S_{1}=S_{u}$ and $S_{2}=S-S_{1}$, and let $P$ be a $uv$-path in $G-S_{2}$. Then, $P\cap S_{1} \neq \phi$. Define $x_{1}(P)\in V(P)$ such that $P_{[x_{1}(P),v]}\cap S_{1}=\{x_{1}(P)\}$. Clearly, $ x_{1}(P)\notin S_{2}$, and so $x_{1}(P)\notin C_{v}$; otherwise if $x_{1}(P)\in C_{v}$ and $x_{1}(P)\in S_{1}\subset S$, then $x_{1}(P)\in S_{v}\subset S_{2}$, which gives a contradiction. We have $ P_{[x_{1}(P),v]}\cap S'\neq \phi$; otherwise $P_{[x_{1}(P),v]}$ is a $x_{1}(P)v$-path in $G'-S'$, as $P\subseteq G-S_{2}\subseteq G'$ and $P_{[x_{1}(P),v]}\cap S'=\phi$, then $x_{1}(P)\in C_{v}$ which gives a contradiction. Define $x'_{1}(P)\in V(P)$ such that $x'_{1}(P)\in P_{[x_{1}(P),v]}\cap S'$. Clearly, $P_{[x'_{1}(P),v]}\cap S=\phi$, as $P\subseteq G-S_{2}$, and $P_{[x'_{1}(P),v]}\cap S_{1}=\phi$. \\
	\linebreak
	Define $S'_{1}=\{x'_{1}(P);P$ is a $uv$-path in $G-S_{2}\}$. Clearly $S'_{1}\cup S_{2}$ is a $uv$-separator of $G$, because all the $uv$-paths that do not have vertices in $S_{2}$, must have vertices in $S'_{1}$. Thus $|S|=\kappa_{G}(u,v)\leq |S'_{1}\cup S_{2}|=|S'_{1}|+|S_{2}|$, so $|S'_{1}|\geq |S|-|S_{2}|=|S_{1}|.$ \\
	\linebreak
	Similarly, for $P$ is a $uv$-path in $G-S_{1}$, we define $x_{2}(P),x'_{2}(P)\in V(P)$ such that $P_{[u,x_{2}(P)]}\cap S_{2}=\{x_{2}(P)\}$, $x'_{2}(P)\in P_{[u,x_{2}(P)]}\cap S'$. Define $S'_{2}=\{x'_{2}(P);P$ is a $uv$-path in $G-S_{1}\}$. The following properties are realized: $P_{[u,x'_{2}(P)]}\cap  S=\phi$ and $|S'_{2}|\geq |S_{2}|$. \\
	\linebreak 
	Since $|S'_{1}| \geq |S_{1}|$ and $|S'_{2}| \geq |S_{2}|$, then, $|S'_{1}|+|S'_{2}| \geq |S_{1}|+|S_{2}|=|S|=k$. Then, $S'_{1}\cap S'_{2}\neq \phi$; otherwise, $|S'_{1}|+|S'_{2}|=|S'_{1}\cup S'_{2}|\leq |S'|= k-1$ which gives a contradiction. Let $a \in S'_{1}\cap S'_{2} $. Then there exist $R$ and $Q$, two $uv$-paths in $G-S_{1}$ and $G-S_{2}$ respectively,  such that $a=x'_{2}(R)=x'_{1}(Q)$. So, $R_{[u,a]}\cup Q_{[a,v]}$ is a connected subgraph in $G$ that contains both $u$ and $v$, then this subgraph contains a $uv$-path $P$, and $P\cap S\subseteq (R_{[u,a]}\cup Q_{[a,v]})\cap S=\phi$; which gives a contradiction. Therefore, the desired result holds.
	
\end{proof} 
\section{A new proof of Menger's theorem}
\begin{lemma}
	\rm{
		Consider a graph $G$, and $u,v,x,y\in V(G)$ such that $uv\notin E(G)$ and $xy\in E(G)$. Suppose that $\kappa_{G-a}(u,v)=\kappa_{G}(u,v)-1$ for all $a\in E(G) \cup (V(G)-\{u,v\}) $. Let $N(y)=\{x_{0},x_{1},...,x_{t}\}$ with $x_{0}=x$. Set $$G'=G-y+\sum_{i=1}^{t}x_{0}x_{i}$$
		Then, $\kappa_{G'}(u,v)=\kappa_{G}(u,v)$.
	}
\end{lemma}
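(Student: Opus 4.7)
Write $k = \kappa_G(u,v)$. Informally, $G'$ is obtained from $G$ by contracting the edge $xy$ into the single vertex $x_0 = x$, and the plan is built around the natural translation between $uv$-walks in $G$ and in $G'$: any sub-walk $x_i y x_j$ passing through $y$ in $G$ can be rerouted through $x_0$ in $G'$ using the new edges $x_0 x_i$ and $x_0 x_j$, and conversely any use of a new edge $x_0 x_i$ in $G'$ expands to the sub-walk $x_0 y x_i$ in $G$. I will prove $\kappa_{G'}(u,v) \leq k$ and $\kappa_{G'}(u,v) \geq k$ separately using this translation.

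For the upper bound, I would take a minimum $uv$-separator $S$ of $G$ and define $S' = S$ if $y \notin S$, and $S' = (S \setminus \{y\}) \cup \{x_0\}$ if $y \in S$; in both cases $|S'| \leq k$. Given a $uv$-path $P$ in $G'$, expanding its new edges produces a $uv$-walk $W$ in $G$ with $V(W) \subseteq V(P) \cup \{y\}$, and $W$ must meet $S$. If the meeting occurs at a vertex other than $y$, it lies on $P$ and in $S \setminus \{y\} \subseteq S'$; if $W$ meets $S$ only at $y$, then $P$ used at least one new edge and so $x_0 \in V(P) \cap S'$. Hence $S'$ is a $uv$-separator of $G'$ of size at most $k$.

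For the lower bound, let $T'$ be a $uv$-separator of $G'$, and I would split on whether $x_0 \in T'$. If $x_0 \notin T'$, I apply the reverse translation: given any $uv$-path $P$ in $G$, rerouting each pass through $y$ via $x_0$ yields a $uv$-walk in $G'$ with vertex set contained in $(V(P) \setminus \{y\}) \cup \{x_0\}$; this walk meets $T'$, and since $x_0 \notin T'$ and $y$ is not a vertex of $G'$, the meeting point lies on $P$, so $T'$ is a $uv$-separator of $G$ and $|T'| \geq k$. If $x_0 \in T'$, then $T' \cup \{y\}$ is a $uv$-separator of $G$ (any $uv$-path that avoids $y$ lives in $G - y \subseteq G'$ and is caught by $T'$), which only gives $|T'| \geq k - 1$. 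To close the gap, I assume for contradiction that $|T'| = k - 1$, so that $T' \cup \{y\}$ is a minimum $uv$-separator of $G$ containing both endpoints of the edge $x_0 y \in E(G)$; Theorem~\ref{thm 1} then forces $\kappa_{G - x_0 y}(u,v) = k$, directly contradicting the hypothesis $\kappa_{G-a}(u,v) = k - 1$ applied to $a = x_0 y$.

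The main obstacle is this last sub-case: the naive separator inflation $T' \cup \{y\}$ of a size-$(k-1)$ separator $T'$ only gives the weaker bound $|T'| \geq k - 1$, leaving a stubborn unit gap. The critical hypothesis and Theorem~\ref{thm 1} are precisely what close this gap, provided the edge $xy$ sits inside some minimum separator of $G$, which is arranged by the assumption $x_0 \in T'$ that forces $T' \cup \{y\}$ to contain both $x_0$ and $y$. The walk-to-path conversions in the other parts of the argument are routine bookkeeping, once one checks that endpoints behave correctly when $x_0$ happens to coincide with a vertex already on the path under consideration.
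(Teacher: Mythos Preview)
Your argument is correct and follows essentially the same route as the paper: both proofs rely on the walk-translation between $G$ and $G'$ for the routine directions, and both close the remaining one-unit gap in the case $x_0\in T'$ by observing that $T'\cup\{y\}$ is then a minimum $uv$-separator of $G$ containing the edge $xy$, whence Theorem~\ref{thm 1} contradicts the criticality hypothesis at $a=xy$. The only cosmetic difference is that your upper bound $\kappa_{G'}(u,v)\le k$ is a clean contraction argument on an arbitrary minimum separator of $G$, whereas the paper first builds a specific separator $S_x=S\cup\{x\}$ from a minimum separator of $G-xy$ and then sandwiches $\kappa_{G'}(u,v)$ between $k-1$ and $k$ before ruling out $k-1$.
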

\begin{proof}
	In the beginning we must clarify that $t\geq 1$. Otherwise, we have $N(y)=\{x_{0}\}$, and so $y\notin V(P)$ for all $P$ being a $uv$-path in $G$. Thus, $\kappa_{G-y}(u,v)=\kappa_{G}(u,v)$, which gives a contradiction.\\
	\linebreak
	Set $\kappa_{G}(u,v)=k$. Let $H=G-xy$, and so $\kappa_{H}(u,v)=k-1$. Then there exists a $uv$-separator $S$ of $H$ such that $|S|=k-1$. Note that $x$ and $y\notin S$; otherwise, suppose without loss of generality that $x\in S$. Then, $G-S=G-x-S\subseteq H-S$. But $S$ is a $uv$-separator of $H$, and so $S$ is a $uv$-separator of $G$ satisfying $|S|=k-1$; which gives a contradiction. Set $S_{x}=S\cup \{x\}$. It is clear that $S_{x}$ is a $uv$-separator of $G$. \\
	\linebreak
	By the construction of $G'$ we have $G-y\subseteq G'$, then $\kappa_{G'}(u,v)\geq k-1$, as $\kappa_{G-y}(u,v)=k-1$. On the order hand, $G'-S_{x}=G-y+\sum_{i=1}^{t}xx_{i}-S_{x}\subseteq G-S_{x}$. Then, $S_{x}$ is a $uv$-separator of $G'$, similarly with $|S_{x}|=k$. Therefore, $\kappa_{G'}(u,v)\leq k$. So, $k-1\leq \kappa_{G'}(u,v)\leq k$. Suppose to the contrary that $\kappa_{G'}(u,v)\neq k$, then $\kappa_{G'}(u,v)=k-1$.\\
	\linebreak
	 Let $S'$ be a minimal $uv$-separator of $G'$, then $|S'|=k-1$. We have $x\notin S'$; otherwise,   $G'-S'=G-y +\sum_{i=1}^{t}xx_{i} - S'= G-(S'\cup \{y\})$, and $|S'\cup \{y\}|=k$ since $y\notin G'$. So, $S'\cup \{y\}$ is a minimal $uv$-separator of $G$ and $xy\in <S'\cup \{y\}>$. Thus, by using Theorem \ref{thm 1}, we have $\kappa_{G-xy}(u,v)=k$; which gives a contradiction.\\
	\linebreak
	Let $P$ be a $uv$-path in $G$. If $y\notin P$, then $P\subset G-y\subset G'$ and so $P\cap S'\neq \phi$ since $S'$ is a $uv$-separator of $G'$. If $y\in P$. Let $x_{i}$ and $x_{j}$ be the predecessor and successor of $y$ on $P$ respectively, $0\leq i\neq j \leq t$. If $x\in P$, then without loss of generality  suppose that $P_{[u,x]}\subset P_{[u,y]}$. Consider $P'=P_{[u,x]} \cup xx_{j} \cup P_{[x_{j},v]}$. Since $P'$ is a $uv$-path in $G'$, then $V(P')\cap S'\neq \phi $. So, $V(P)\cap S' \neq \phi$ since $V(P')\subset V(P)$. If $x\notin P$. Consider $P"=P_{[u,x_{i}]}\cup\{x\} \cup x_{i}x \cup xx_{j} \cup P_{[x_{j},v]}$. Similarly, $P"$ is a $uv$-path in $G'$, then $V(P")\cap S'\neq \phi$. Thus, $V(P)\cap S' \neq \phi$ since $V(P")-\{x\}=V(P)$ and $x\notin S'$. Therefore, $V(P)\cap S'\neq \phi$ in all the cases of $P$. This implies that $S'$ is a $uv$-separator of $G$ with $|S'|=k-1$; which gives a contradiction.  Therefore, the desired result holds.
	
\end{proof}

\begin{theorem} 
	\rm{\textbf{(Menger, 1927)}\\
	Consider a graph $G$, and $u,v\in V(G)$ such that $uv\notin E(G)$. Then the size of a minimal $uv$-separator of $G$ is equal to the maximum number of internally disjoint $uv$-paths in $G$; i.e. $\kappa_{G}(u,v)=\mu_{G}(u,v)$ .\\
	}
\end{theorem}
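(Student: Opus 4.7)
The plan is to prove $\kappa_G(u,v)=\mu_G(u,v)$ by induction on $|V(G)|+|E(G)|$, using the two preceding results as the driving tools. Since every $uv$-separator must meet every path of any internally disjoint family, the inequality $\mu_G(u,v)\leq\kappa_G(u,v)$ is immediate, so only the reverse direction needs attention. The base case $\kappa_G(u,v)=0$ is trivial, so from now on I may assume $k:=\kappa_G(u,v)\geq 1$.

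The first reduction I would attempt is to delete something without dropping $\kappa$. If there exists $a\in E(G)\cup(V(G)-\{u,v\})$ with $\kappa_{G-a}(u,v)=k$, then $G-a$ has strictly smaller $|V|+|E|$, the inductive hypothesis supplies $k$ internally disjoint $uv$-paths in $G-a$, and those are automatically paths in $G$. This clears the easy branch.

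Assume now the critical case, where $\kappa_{G-a}(u,v)=k-1$ for every $a\in E(G)\cup(V(G)-\{u,v\})$; this is exactly the hypothesis of the preceding lemma. Before invoking it, I would dispose of the degenerate configuration in which every edge of $G$ has an endpoint in $\{u,v\}$: then every $uv$-path has length $2$, so the common neighbourhood $N(u)\cap N(v)$ is simultaneously a $uv$-separator and the set of inner vertices of $|N(u)\cap N(v)|$ internally disjoint $uv$-paths, and $\mu_G(u,v)=\kappa_G(u,v)$ follows at once. Otherwise I can pick an edge $xy$ with $x,y\in V(G)-\{u,v\}$ and form the graph $G'=G-y+\sum_{i=1}^{t}xx_i$ of the lemma; the lemma gives $\kappa_{G'}(u,v)=k$, and a count of vertices and edges shows $|V(G')|+|E(G')|<|V(G)|+|E(G)|$, so the inductive hypothesis yields $k$ internally disjoint $uv$-paths $P'_1,\dots,P'_k$ in $G'$.

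The final step is to lift these back to $G$, and this is where the proof is most delicate. Because every edge of $E(G')-E(G)$ is incident to $x$, and the $P'_i$ are pairwise internally disjoint, at most one of the paths, say $P'_1$, contains $x$; the remaining paths use no added edge and already live in $G$. Along $P'_1$ I would replace each appearance of an added edge by a short detour through $y$: a segment $x_ixx_j$ in which both incident edges were added becomes $x_iyx_j$ (using $x_iy,yx_j\in E(G)$), while a segment $x_ixw$ with only one added incidence becomes $x_iyxw$. In every scenario the vertex $y$ is inserted at most once, and since $y$ lies in none of the other $P'_j$ (it is not in $V(G')$) while $x$ lies only in $P'_1$, the resulting family remains a collection of $k$ internally disjoint $uv$-paths in $G$. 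The main obstacle, and the only point where one has to be genuinely careful, is to verify case-by-case that each detour really produces an honest path rather than a walk with a repeated vertex, for each possible local configuration of $x$ on $P'_1$.
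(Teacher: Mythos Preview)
Your proof is correct and follows essentially the same route as the paper: reduce to the edge- and vertex-critical case, apply the contraction lemma to pass to $G'=G-y+\sum_{i=1}^{t}xx_i$, invoke the inductive hypothesis there, and lift the $k$ internally disjoint paths back to $G$ by rerouting through $y$ the (at most one) path that passes through $x$. The only differences are organizational: you induct on $|V(G)|+|E(G)|$ where the paper takes a vertex-minimal counterexample and then strips edges to reach criticality, and your case split in the lifting step is phrased in terms of which of the two edges at $x$ is ``added'' whereas the paper splits on whether the predecessor of $x$ lies in $N(y)$; both analyses yield the same detours $x_iyx_j$ or $x_iyxw$.
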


\begin{proof}
	Suppose that the statement is false, and let $G$ be a graph with the least number of vertices such that $\kappa_{G}(u,v)=k$ and $G$ contains no $k$ internally disjoint $uv$-paths. $G$ contains a spanning subgraph $H$ which has $\kappa_{H}(u,v)=k$ but $\kappa_{H-e}(u,v)=k-1$ for all $e\in E(H)$; $G=H$ possibly. Clearly, $\kappa_{G-a}(u,v) = k-1$ for all $a\in E(G)\cup (V(G)-\{u,v\})$.\\ 
	\linebreak
	\texttt{Claim.} There exists $x$ and $y\in V(G)-\{u,v\}$ such that $xy\in E(G)$.\\
	\par Suppose that for all $x,y\in V(G)-\{u,v\}$, $xy\notin E(G)$. Since $N(u)$ is a $uv$-separator, then $|N(u)|\geq k$. Set $\{w_{1},w_{2},...,w_{k}\}\subset N(u)$. If there exists $1\leq i \leq k$ such that $w_{i}\notin N(v)$ and since $V(G)-\{u,v\}$ is stable, then $N(w_{i})=\{u\}$, and so $w_{i}\notin P$ for all $P$ being a $uv$-path in $G$. Thus, $\kappa_{G-w_{i}}(u,v)=k$; which gives a contradiction to the fact that $\kappa_{G-w_{i}}(u,v)=k-1$. Then $w_{i}\in N(v)$ for all $1\leq i \leq k$. Then $\{uw_{i}v\}_{1\leq i \leq k}$ is a set of $k$ disjoints $uv$-paths in $G$; which gives a contradiction.\\
	\linebreak
	Let $x$ and $y\in V(G)-\{u,v\}$ such that $xy\in E(G)$. Let $N(y)=\{x_{0},x_{1},...,x_{t}\}$ with $x_{0}=x$. Set  $$G'=G-y+\sum_{i=1}^{t}x_{0}x_{i}.$$
	Using Theorem \ref{thm 1}, then $\kappa_{G'}(u,v)=k$. And since $v(G')<v(G)$, then $G'$ contains $k$ disjoint $uv$-paths, as $v(G)$ is minimal for a graph $G$ such that $\kappa_{G}(u,v)=k$ and $G$ contains no $k$ internally disjoint $uv$-paths. Let $\{P^1,P^2,..$ $.,P^k\}$ be the set of $k$ disjoint $uv$-paths in $G'$.\\
	\textbf{Case 1:} For all $1\leq i \leq t$, $xx_{i}\notin P^j$ $\forall 1\leq j\leq k$. Then $\{P^1,P^2,...,P^k\}$ is a set of $k$ internally disjoint $uv$-paths in $G$; which gives a contradiction.\\
	\textbf{Case 2:} There exists $j$, $1\leq j \leq k$ such that $xx_{i}\in E(P^j)$ for some $1\leq i \leq t$. Note that $d_{P^j}(x)=2$ since $x\notin \{u,v\}$. Without loss of generality, suppose that $x_{i}$ is the successor of $x$ on $P^j$, and let $w$ be the predecessor of $x$ on $P^j$.\\
	\begin{enumerate}
		\item If $w\neq x_{r}$, $\forall 1\leq r\neq i \leq t$. Consider $Q^j=P^{j}_{[u,x]}\cup \{y\} \cup xy \cup yx_{i} \cup P^{j}_{[x_{i},v]}$.\\
		\item If $w=x_{r}$, for some $1\leq r\neq i \leq t$. Consider $Q^j= P^{j}_{[u,x_{r}]}\cup \{y\}\cup x_{r}y \cup yx_{i}\cup P^{j}_{[x_{i},v]}$.
	\end{enumerate}
	In both cases $Q^{j}$ is a $uv$-path in $G$, and for all $1\leq s \neq j \leq k$, $x\notin V(P^{s})$ as $\{P^1,P^2,...,P^k\}$ is a set of internally disjoint $uv$-paths, so $P^s\subseteq G-y\subseteq G$. Clearly $\{P^1,..,P^{j-1},Q^j,P^{j+1},..,P^k\}$ is a set of $k$ disjoint $uv$-paths in $G$, a contradiction  and so the result holds.
	
\end{proof}

\end{document}